\begin{document}

\title{Recovery of Block-Sparse Representations from Noisy Observations via Orthogonal Matching Pursuit}

\author{Jun Fang,~\IEEEmembership{Member,~IEEE}, and Hongbin Li,~\IEEEmembership{Senior Member,~IEEE}
\thanks{Jun Fang and Hongbin Li are with the Department of Electrical and Computer
Engineering, Stevens Institute of Technology, Hoboken, NJ 07030,
USA, E-mails: \{Jun.Fang@ieee.org, Hongbin.Li@stevens.edu\} }
\thanks{This work was supported in part by the National Science
Foundation under Grant ECCS-0901066 and by the Air Force Office of
Scientific Research under Grant FA9550-09-1-0310. }}

\maketitle

\begin{abstract}
We study the problem of recovering the sparsity pattern of
block-sparse signals from noise-corrupted measurements. A simple,
efficient recovery method, namely, a block-version of the
orthogonal matching pursuit (OMP) method, is considered in this
paper and its behavior for recovering the block-sparsity pattern
is analyzed. We provide sufficient conditions under which the
block-version of the OMP can successfully recover the block-sparse
representations in the presence of noise. Our analysis reveals
that exploiting block-sparsity can improve the recovery ability
and lead to a guaranteed recovery for a higher sparsity level.
Numerical results are presented to corroborate our theoretical
claim.
\end{abstract}

\begin{keywords}
Block-sparsity, orthogonal matching pursuit, compressed sensing.
\end{keywords}


\section{Introduction}
The problem of recovering a high dimensional sparse signal based
on a small number of measurements has been of significant interest
in signal and imaging processing, applied mathematics, and
statistics. Such a problem arises from a number of applications,
including subset selection in regression \cite{Miller90},
structure estimation in graphical models \cite{Meinshausen06}, and
compressed sensing \cite{CandesTao05}. Among these applications,
many involves determining the locations of the nonzero components
of the sparse signal, which is also referred to as sparsity
pattern recovery (or more simply, sparsity recovery). In practice,
the locations of the nonzero components (or, the support of the
sparse signals) usually have significant physical meanings. For
example, in chemical agent detection, the indices for the nonzero
coordinates indicates the chemical components present in a
mixture. In sparse linear regression, the recovered support
corresponds to a small subset of features which linearly influence
the observed data. Due to its importance, sparsity pattern
recovery has received considerable attention over the past few
years.  In \cite{Fuchs05,Tropp06}, the authors analyzed the
behavior of $\ell_1$-constrained quadratic programming (QP), also
referred to as the Lasso, for recovering the sparsity pattern in a
deterministic framework. Sufficient conditions were established
for exact sparsity pattern recovery. Such a problem was also
studied in \cite{Wainwright09a} from a statistical perspective,
where necessary and sufficient conditions on the problem
dimension, the number of nonzero elements, and the number of
measurements are established for sparsity pattern recovery.
Recently, information-theoretic limits of sparsity recovery with
an exhaustive search decoder were studied in
\cite{Wainwright09,FletcherRangan09}.


In this paper, we consider the problem of recovering block-sparse
signals whose nonzero elements appear in fixed blocks.
Block-sparse signals arise naturally. For example, the atomic
decomposition of multi-band signals \cite{Mishali09} or audio
signals \cite{GribonvalBacry03} usually results in a block-sparse
structure in which the nonzero coefficients occur in clusters.
Recovery of block-sparse signals has been extensively studied in
\cite{EldarKuppinger10,EldarMishali09,BaraniukCevher10}, in which
the recovery behaviors of the basis pursuit (BP), or
$\ell_1$-constrained QP, and the orthogonal matching pursuit (OMP)
algorithms were analyzed via the restricted isometry property
(RIP) \cite{EldarMishali09,BaraniukCevher10} and the mutual
coherence property \cite{EldarKuppinger10}. Their analyses
\cite{EldarKuppinger10,EldarMishali09,BaraniukCevher10} revealed
that exploiting block-sparsity yields a relaxed condition which
can guarantee recovery for a higher sparsity level as compared
with treating block-sparse signals as conventional sparse signals.
Nevertheless, most of these studies focused on noiseless
scenarios. In practice, measurements are inevitably contaminated
with noise and underlying uncertainties. It is therefore important
to analyze the effect of measurement noise on the block-sparsity
pattern recovery, e.g. under what conditions the exact sparsity
pattern can be recovered, and does exploiting block-sparsity still
lead to a guaranteed recovery for a higher sparsity level? These
questions will be addressed in this paper. Specifically, we
consider a block version of the OMP algorithm and study its
behavior for recovering block-sparsity pattern in the presence of
noise. A comparison with the theoretical results for the
conventional OMP algorithm \cite{Tropp06} is presented to
highlight the benefits of exploiting block-sparsity property.





\section{Problem Formulation}
We consider the problem of recovering a block-sparse signal
$\mathbf{x}\in\mathbb{R}^{n}$ from noise-corrupted measurements
\begin{align}
\mathbf{y}=\mathbf{A}\mathbf{x}+\mathbf{w}
\end{align}
where $\mathbf{A}\in\mathbb{R}^{m\times n}$ ($m<n$) is the
measurement matrix with unit-norm columns, and $\mathbf{w}$ is an
arbitrary and unknown vector of errors. To define block-sparsity,
as in \cite{EldarKuppinger10}, we model $\mathbf{x}$ as a
concatenation of equal-length blocks
\begin{align}
\mathbf{x}=[\mathbf{x}_1^T\phantom{0}\mathbf{x}_2^T\phantom{0}\ldots\phantom{0}\mathbf{x}_L^T]^T
\end{align}
where $\mathbf{x}_l\triangleq[x_{(l-1)
d+1}\phantom{0}\ldots\phantom{0} x_{ld}]^T$ is a $d$-dimensional
vector. Clearly, the vector $\mathbf{x}$ has a dimension $n=Ld$,
and the vector is called block $K$-sparse if its block component
$\mathbf{x}_l$ has nonzero Euclidean norm for at most $K$ indices
$l$. Similarly, the measurement matrix $\mathbf{A}$ can be
expressed as a concatenation of column-block matrices
$\{\mathbf{A}_l\}_{l=1}^L$
\begin{align}
\mathbf{A}=[\mathbf{A}_1\phantom{0}\mathbf{A}_2\phantom{0}\ldots\mathbf{A}_L]
\end{align}
where $\mathbf{A}_l\in\mathbb{R}^{m\times d}$. Also, we assume
that the number of rows of $\mathbf{A}$ is an integer multiples of
$d$, i.e. $m=Rd$ with $R$ an integer. The conventional coherence
metric of the measurement matrix $\mathbf{A}$ is defined as
\begin{align}
\mu\triangleq\max_{i\neq j}|\mathbf{a}_i^T\mathbf{a}_j|
\end{align}
where $\mathbf{a}_i$ denotes the $i\text{th}$ column of
$\mathbf{A}$. This coherence metric, albeit useful, is not
sufficient to characterize the block-structure of the sparse
signal. To exploit the block-sparsity property, we define the
block-coherence $\mu_{\text{B}}$ and sub-coherence $\nu$ (these
two concepts were firstly introduced in \cite{EldarKuppinger10}):
\begin{align}
\mu_{\text{B}}\triangleq&\max_{i,j\neq
i}\quad\frac{1}{d}\rho(\mathbf{A}_i^T\mathbf{A}_j) \nonumber\\
\nu\triangleq&\max_{l}\max_{i,j\neq i}\quad
|\mathbf{a}_i^T\mathbf{a}_j|, \qquad
\mathbf{a}_i,\mathbf{a}_j\in\mathbf{A}_l
\end{align}
where $\rho(\mathbf{X})$ denotes the spectral norm of
$\mathbf{X}$, which is defined as the square root of the maximum
eigenvalue of $\mathbf{X}^T\mathbf{X}$, i.e.
$\sqrt{\lambda_{\text{max}}(\mathbf{X}^T\mathbf{X})}$. Related
properties of the block-coherence $\mu_{\text{B}}$ can be found in
\cite{EldarKuppinger10}. We see that $\mu_{\text{B}}$ quantifies
the coherence between blocks of $\mathbf{A}$, while the coherence
within blocks is characterized by the sub-coherence $\nu$.

The objective of this paper is to identify sufficient conditions
on the measurement matrix $\mathbf{A}$ (in terms of the
block-coherence $\mu_{\text{B}}$ and the sub-coherence $\nu$), as
well as the signal vector $\mathbf{x}$ and the error vector
$\mathbf{w}$, under which the block-sparsity pattern can be
recovered from the noisy measurements. We are particularly
interested in analyzing the recovery ability of a block-version of
the orthogonal matching pursuit (OMP). OMP is a simple greedy
approximation algorithm developed in
\cite{ChenBillings89,PatiRezaiifar93}. Despite its simplicity, OMP
is a provably good approximation algorithm which achieves
performance close to Lasso in certain scenarios
\cite{Tropp04,TroppGilbert07}. In the following, we briefly
summarize the block-version of the OMP, which is also termed as
block-OMP (BOMP). This BOMP is a slight variant of the original
BOMP that was introduced in \cite{EldarKuppinger10} for noiseless
scenarios.

\emph{BOMP Algorithm:}
\begin{enumerate}
\item Initialize the residual $\mathbf{r}_0=\mathbf{y}$, the index
set $S_0=\varnothing$.
\item At the $t\text{th}$ step ($t\geq 1$), we choose the block
that is best matched to $\mathbf{r}_{t-1}$ according to
\begin{align}
i_t=\arg\max_{i}\|\mathbf{A}_i^T\mathbf{r}_{t-1}\|_2
\label{matching-criterion}
\end{align}
\item Augment the index set and the matrix of chosen blocks:
$S_{t}=S_{t-1}\cup\{i_t\}$ and
$\boldsymbol{\Psi}^{(t)}=[\boldsymbol{\Psi}^{(t-1)}\phantom{0}\mathbf{A}_{i_t}]$.
We use the convention that $\boldsymbol{\Psi}^{(0)}$ is an empty
matrix.
\item Solve a least squares problem to obtain a new signal
estimate
$\mathbf{x}_t=\arg\min_{\mathbf{x}}\|\mathbf{y}-\boldsymbol{\Psi}^{(t)}\mathbf{x}\|_2$
\item Calculate the new residual as
$\mathbf{r}_{t}=\mathbf{y}-\boldsymbol{\Psi}^{(t)}\mathbf{x}_t=\mathbf{y}-\mathcal{P}_{\boldsymbol{\Psi}^{(t)}}\mathbf{y}$,
where
$\mathcal{P}_{\boldsymbol{\Psi}^{(t)}}=\boldsymbol{\Psi}^{(t)}(\boldsymbol{\Psi}^{(t)})^{\dag}$
is the orthogonal projection onto the column space of
$\boldsymbol{\Psi}^{(t)}$, and $^{\dag}$ stands for the
pseudo-inverse.
\item If $\|\mathbf{r}_{t}\|_2\geq\epsilon$, return to Step 2;
otherwise stop.
\end{enumerate}

\section{Block-Sparsity Pattern Recovery Analysis}
Let $\mathbf{x}_{\text{nz}}$ denote a $Kd$ dimensional column
vector constructed by stacking the nonzero block components
$\mathbf{x}_l,\forall \{l|\mathbf{x}_l\neq\mathbf{0}\}$,
$\mathbf{A}_{\text{nz}}\in\mathbb{R}^{m\times Kd}$ denote a
submatrix of $\mathbf{A}$ constructed by concatenating the
column-blocks $\mathbf{A}_l,\forall
\{l|\mathbf{x}_l\neq\mathbf{0}\}$, i.e. the blocks corresponding
to the nonzero $\mathbf{x}_l$, and let
$\mathbf{A}_{\text{z}}\in\mathbb{R}^{m\times (L-K)d}$ stand for a
submatrix of $\mathbf{A}$ constructed by concatenating the
column-blocks $\mathbf{A}_l$ corresponding to zero $\mathbf{x}_l$.
For notational convenience, let $I_1=\{l_1,l_2,\ldots,l_K\}$
denote a set of indices for which
$\mathbf{x}_{l_i}\neq\mathbf{0}$, and
$I_2=\{l_{K+1},l_{K+2},\ldots,l_{L}\}$ denote a set of indices for
which $\mathbf{x}_{l_i}=\mathbf{0}$. Therefore we can write
\begin{align}
\mathbf{x}_{\text{nz}}\triangleq&\left[
\begin{array}{cccc}\mathbf{x}_{l_1}^T&\mathbf{x}_{l_2}^T&\ldots&\mathbf{x}_{l_K}^T\end{array}\right]^{T}
\nonumber\\
\mathbf{A}_{\text{nz}}\triangleq&\left[
\begin{array}{cccc}\mathbf{A}_{l_1}&\mathbf{A}_{l_2}&\ldots&\mathbf{A}_{l_K}\end{array}\right]
\nonumber\\
\mathbf{A}_{\text{z}}\triangleq&\left[
\begin{array}{cccc}\mathbf{A}_{l_{K+1}}&\mathbf{A}_{l_{K+2}}&\ldots&\mathbf{A}_{l_L}\end{array}\right]
\nonumber
\end{align}

The measurements can therefore be written as
\begin{align}
\mathbf{y}=\mathbf{A}_{\text{nz}}\mathbf{x}_{\text{nz}}+\mathbf{w}
\label{model-1}
\end{align}
We can decompose the error vector $\mathbf{w}$ into
$\mathbf{w}=\mathcal{P}_{\mathbf{A}_{\text{nz}}}\mathbf{w}+\mathcal{P}_{\mathbf{A}_{\text{nz}}}^{\perp}\mathbf{w}$,
where
$\mathcal{P}_{\mathbf{A}_{\text{nz}}}=\mathbf{A}_{\text{nz}}\mathbf{A}_{\text{nz}}^{\dagger}$
denotes the orthogonal projection onto the subspace spanned by the
columns of $\mathbf{A}_{\text{nz}}$, and
$\mathcal{P}_{\mathbf{A}_{\text{nz}}}^{\perp}=\mathbf{I}-\mathcal{P}_{\mathbf{A}_{\text{nz}}}$
is the orthogonal projection onto the null space of
$\mathbf{A}_{\text{nz}}^T$. We can further write
\begin{align}
\mathbf{y}=\mathbf{A}_{\text{nz}}\mathbf{x}_{\text{nz}}+\mathbf{w}=&\mathbf{A}_{\text{nz}}\mathbf{x}_{\text{nz}}+
\mathcal{P}_{\mathbf{A}_{\text{nz}}}\mathbf{w}+\mathcal{P}_{\mathbf{A}_{\text{nz}}}^{\perp}\mathbf{w}\nonumber\\
=&\mathbf{A}_{\text{nz}}(\mathbf{x}_{\text{nz}}+
\mathbf{A}_{\text{nz}}^{\dagger}\mathbf{w})+\mathcal{P}_{\mathbf{A}_{\text{nz}}}^{\perp}\mathbf{w}\nonumber\\
\triangleq&\mathbf{A}_{\text{nz}}\mathbf{\tilde{x}}_{\text{nz}}+\mathbf{\tilde{w}}
\label{model}
\end{align}
where
$\mathbf{\tilde{x}}_{\text{nz}}\triangleq\mathbf{x}_{\text{nz}}+
\mathbf{A}_{\text{nz}}^{\dagger}\mathbf{w}$, and
$\mathbf{\tilde{w}}\triangleq\mathcal{P}_{\mathbf{A}_{\text{nz}}}^{\perp}\mathbf{w}$.
Equation (\ref{model}) decomposes the measurements into two
mutually orthogonal components: a signal component
$\mathbf{A}_{\text{nz}}\mathbf{\tilde{x}}_{\text{nz}}$ and a noise
component $\mathbf{\tilde{w}}$. The reason for doing so is that
even the exact signal support (block-sparsity pattern) is known,
there is no way to separate the noise projection term
$\mathbf{A}_{\text{nz}}^{\dagger}\mathbf{w}$ from the true signal
$\mathbf{x}_{\text{nz}}$. Hence it is more convenient to carry out
our analysis based on (\ref{model}) instead of (\ref{model-1}).

Recall that, at each iteration, the BOMP algorithm searches for a
block that is best matched to the residual vector according to
(\ref{matching-criterion}). We can define a greedy selection ratio
that determines whether or not a correct block is selected at each
iteration
\begin{align}
\gamma_t=\frac{\max_{l\in
I_2}\|\mathbf{A}_{l}^T\mathbf{r}_{t-1}\|_2} {\max_{l\in
I_1}\|\mathbf{A}_{l}^T\mathbf{r}_{t-1}\|_2} \label{gsr}
\end{align}
where $\mathbf{r}_{t-1}$ is the residual vector at iteration
$t-1$. Clearly, at each iteration, the algorithm picks an index
whose corresponding block is in $\mathbf{A}_{\text{nz}}$ if
$\gamma_t<1$, otherwise an incorrect index whose corresponding
block is in $\mathbf{A}_{\text{z}}$ is chosen. Since the residual
is orthogonal to the subspace spanned by all the previously chosen
block-columns, no index will be chosen twice. Therefore, in order
to recover the block-sparsity pattern, we need to guarantee
$\gamma_t<1$ throughout the first $K$ iterations, i.e.
$\gamma_t<1,\forall t\leq K$. Here for simplicity, we assume that
the number of nonzero blocks, $K$, is known \emph{a priori}. In
practice, $K$ can be automatically determined by the BOMP
algorithm given the error tolerance $\epsilon$ ($\epsilon$ can be
estimated from the observation noise power in practice). As long
as $K$ is not overestimated, i.e. $\hat{K}\leq K$, we can ensure
that all the chosen indices are from the set of correct indices
$I_1$.

In the following, we derive sufficient conditions that guarantee
$\gamma_t<1$ throughout the first $K$ iterations. Before
proceeding, we define a general mixed $\ell_2/\ell_p$-norm
($p=1,2,\infty$) that will be used throughout this paper. For a
vector
$\mathbf{z}=[\mathbf{z}_1^T\phantom{0}\mathbf{z}_2^T\phantom{0}\ldots\phantom{0}\mathbf{z}_Q^T]^T$
consisting of equal-length blocks with block size $d$, the general
mixed $\ell_2/\ell_p$-norm (with block size $d$) is defined as
\begin{align}
\|\mathbf{z}\|_{2,p}=\|\mathbf{v}\|_p \qquad \text{where
$v_q=\|\mathbf{z}_q\|_2$}
\end{align}
Correspondingly, for a matrix $\mathbf{X}\in\mathbb{R}^{Ud\times
Qd}$, where $U$ and $Q$ can be any positive integers, the mixed
matrix norm (with block size $d$) is defined as
\begin{align}
\|\mathbf{X}\|_{2,p}=\max_{\mathbf{z}\neq\mathbf{0}}\frac{\|\mathbf{X}\mathbf{z}\|_{2,p}}{\|\mathbf{z}\|_{2,p}}
\end{align}
Resorting to this general mixed $\ell_2/\ell_p$-norm (with block
size $d$) definition, the greedy selection ratio defined in
(\ref{gsr}) can be re-expressed as
\begin{align}
\gamma_t=\frac{\max_{\{l:\mathbf{x}_l=\mathbf{0}\}}\|\mathbf{A}_{l}^T\mathbf{r}_{t-1}\|_2}
{\max_{\{l:\mathbf{x}_l\neq\mathbf{0}\}}\|\mathbf{A}_{l}^T\mathbf{r}_{t-1}\|_2}
=\frac{\|\mathbf{A}_{\text{z}}^T\mathbf{r}_{t-1}\|_{2,\infty}}{\|\mathbf{A}_{\text{nz}}^T\mathbf{r}_{t-1}\|_{2,\infty}}
\end{align}

Suppose that the BOMP algorithm has successfully executed the
first $k$ ($k<K$) iterations with residual
\begin{align}
\mathbf{r}_k=\mathbf{y}-\mathcal{P}_{\boldsymbol{\Phi}_1}\mathbf{y}
\label{residual}
\end{align}
where $\boldsymbol{\Phi}_1\in\mathbb{R}^{m\times kd}$ is a matrix
constructed by concatenating the $k$ block-columns chosen from the
previous $k$ iterations, and
$\mathcal{P}_{\boldsymbol{\Phi}_1}=\boldsymbol{\Phi}_1\boldsymbol{\Phi}_1^{\dag}$
is the orthogonal projection onto the column space of
$\boldsymbol{\Phi}_1$. Note that $\boldsymbol{\Phi}_1$ is a
sub-matrix of $\mathbf{A}_{\text{nz}}$ since we assume that the
algorithm selected the correct indices during the first $k$
iterations. Let $\boldsymbol{\Phi}_2$ be a matrix constructed by
concatenating the remaining $K-k$ column-blocks in
$\mathbf{A}_{\text{nz}}$. Without loss of generality, we can write
$\mathbf{A}_{\text{nz}}=[\boldsymbol{\Phi}_1\phantom{0}\boldsymbol{\Phi}_2]$,
i.e.
$\boldsymbol{\Phi}_1\triangleq[\mathbf{A}_{l_1}\phantom{0}\ldots\phantom{0}\mathbf{A}_{l_k}]$,
and
$\boldsymbol{\Phi}_2\triangleq[\mathbf{A}_{l_{k+1}}\phantom{0}\ldots\phantom{0}\mathbf{A}_{l_K}]$.
Also, we write
$\mathbf{\tilde{x}}_{\text{nz}}=[\mathbf{\tilde{x}}_{l_1}^T
\phantom{0}\mathbf{\tilde{x}}_{l_1}^T\phantom{0}\ldots\phantom{0}\mathbf{\tilde{x}}_{l_K}^T]^T=
[\boldsymbol{\phi}_1^T\phantom{0}\boldsymbol{\phi}_2^T ]^T$, where
$\boldsymbol{\phi}_1\triangleq[\mathbf{\tilde{x}}_{l_1}^T\phantom{0}\ldots\phantom{0}\mathbf{\tilde{x}}_{l_k}^T]^T$,
and
$\boldsymbol{\phi}_2\triangleq[\mathbf{\tilde{x}}_{l_{k+1}}^T\phantom{0}\ldots\phantom{0}\mathbf{\tilde{x}}_{l_K}^T]^T$.
Substituting (\ref{model}) into (\ref{residual}), the residual can
be written as
\begin{align}
\mathbf{r}_k=&\mathbf{A}_{\text{nz}}\mathbf{\tilde{x}}_{\text{nz}}+\mathbf{\tilde{w}}-\mathcal{P}_{\boldsymbol{\Phi}_1}(
\mathbf{A}_{\text{nz}}\mathbf{\tilde{x}}_{\text{nz}}+\mathbf{\tilde{w}})\nonumber\\
\stackrel{(a)}{=}&\mathbf{A}_{\text{nz}}\mathbf{\tilde{x}}_{\text{nz}}-\mathcal{P}_{\boldsymbol{\Phi}_1}
\mathbf{A}_{\text{nz}}\mathbf{\tilde{x}}_{\text{nz}}+\mathbf{\tilde{w}} \nonumber\\
\stackrel{(b)}{=}&\boldsymbol{\Phi}_2\boldsymbol{\phi}_2-\mathcal{P}_{\boldsymbol{\Phi}_1}
\boldsymbol{\Phi}_2\boldsymbol{\phi}_2+\mathbf{\tilde{w}} \nonumber\\
\stackrel{(c)}{=}&\mathbf{\tilde{r}}_k+\mathbf{\tilde{w}}
\label{residual-p}
\end{align}
where $(a)$ comes from the fact that $\mathbf{\tilde{w}}$ is
orthogonal to the column space of $\boldsymbol{\Phi}_1$, and $(b)$
comes by noting that
$\mathcal{P}_{\boldsymbol{\Phi}_1}\boldsymbol{\Phi}_1=\boldsymbol{\Phi}_1$,
and in $(c)$ we define
$\mathbf{\tilde{r}}_k\triangleq\boldsymbol{\Phi}_2\boldsymbol{\phi}_2-\mathcal{P}_{\boldsymbol{\Phi}_1}
\boldsymbol{\Phi}_2\boldsymbol{\phi}_2$. Using this result, the
greedy selection ratio at iteration $k+1$ becomes
\begin{align}
\gamma_{k+1}=&\frac{\|\mathbf{A}_{\text{z}}^T\mathbf{r}_k\|_{2,\infty}}{\|\mathbf{A}_{\text{nz}}^T\mathbf{r}_k\|_{2,\infty}}
=\frac{\|\mathbf{A}_{\text{z}}^T(\mathbf{\tilde{r}}_k+\mathbf{\tilde{w}})\|_{2,\infty}}
{\|\mathbf{A}_{\text{nz}}^T(\mathbf{\tilde{r}}_k+\mathbf{\tilde{w}})\|_{2,\infty}}
\nonumber\\
=&\frac{\|\mathbf{A}_{\text{z}}^T(\mathbf{\tilde{r}}_k+\mathbf{\tilde{w}})\|_{2,\infty}}
{\|\mathbf{A}_{\text{nz}}^T\mathbf{\tilde{r}}_k\|_{2,\infty}}
\stackrel{(a)}{\leq}
\frac{\|\mathbf{A}_{\text{z}}^T\mathbf{\tilde{r}}_k\|_{2,\infty}+\|\mathbf{A}_{\text{z}}^T\mathbf{\tilde{w}}\|_{2,\infty}}
{\|\mathbf{A}_{\text{nz}}^T\mathbf{\tilde{r}}_k\|_{2,\infty}} \nonumber\\
\stackrel{(b)}{=}&\frac{\|\mathbf{A}_{\text{z}}^T\mathcal{P}_{\mathbf{A}_{\text{nz}}}\mathbf{\tilde{r}}_k\|_{2,\infty}}
{\|\mathbf{A}_{\text{nz}}^T\mathbf{\tilde{r}}_k\|_{2,\infty}}+\frac{\|\mathbf{A}_{\text{z}}^T\mathbf{\tilde{w}}\|_{2,\infty}}
{\|\mathbf{A}_{\text{nz}}^T\mathbf{\tilde{r}}_k\|_{2,\infty}}\nonumber\\
=&\frac{\|\mathbf{A}_{\text{z}}^T(\mathbf{A}_{\text{nz}}^{\dag})^T\mathbf{A}_{\text{nz}}^T\mathbf{\tilde{r}}_k\|_{2,\infty}}
{\|\mathbf{A}_{\text{nz}}^T\mathbf{\tilde{r}}_k\|_{2,\infty}}+\frac{\|\mathbf{A}_{\text{z}}^T\mathbf{\tilde{w}}\|_{2,\infty}}
{\|\mathbf{A}_{\text{nz}}^T\mathbf{\tilde{r}}_k\|_{2,\infty}}\nonumber\\
\leq
&\|\mathbf{A}_{\text{z}}^T(\mathbf{A}_{\text{nz}}^{\dag})^T\|_{2,\infty}+
\frac{\|\mathbf{A}_{\text{z}}^T\mathbf{\tilde{w}}\|_{2,\infty}}
{\|\mathbf{A}_{\text{nz}}^T\mathbf{\tilde{r}}_k\|_{2,\infty}}
\label{ratio}
\end{align}
where $(a)$ comes from the fact the general mixed
$\ell_2/\ell_p$-norm satisfies the triangle inequality:
$\|\mathbf{a}+\mathbf{b}\|_{2,\infty}\leq\|\mathbf{a}\|_{2,\infty}+\|\mathbf{b}\|_{2,\infty}$,
which can be readily verified, $(b)$ follows from
$\mathcal{P}_{\mathbf{A}_{\text{nz}}}\mathbf{\tilde{r}}_k=\mathbf{\tilde{r}}_k$
since $\mathbf{\tilde{r}}_k$ lies in the column space of
$\mathbf{A}_{\text{nz}}$. Our objective is to identify conditions
assuring $\gamma_{k+1}<1$.

If the measurement process is perfect and noise-free, that is,
$\mathbf{y}=\mathbf{A}\mathbf{x}$, then the greedy selection ratio
is simply upper bounded by
\begin{align}
\gamma_{k+1}\leq\|\mathbf{A}_{\text{z}}^T(\mathbf{A}_{\text{nz}}^{\dag})^T\|_{2,\infty}
\end{align}
Furthermore, it has been shown in \cite[Lemma 4]{EldarKuppinger10}
that
$\|\mathbf{A}_{\text{z}}^T(\mathbf{A}_{\text{nz}}^{\dag})^T\|_{2,\infty}$
is upper bounded by
\begin{align}
\|\mathbf{A}_{\text{z}}^T(\mathbf{A}_{\text{nz}}^{\dag})^T\|_{2,\infty}
\leq\frac{Kd\mu_{\text{B}}}{1-(d-1)\nu-(K-1)d\mu_{\text{B}}}
\label{eq6}
\end{align}
Therefore the condition $\gamma_{k+1}<1$ holds universally if the
block-coherence $\mu_{\text{B}}$ and sub-coherence $\nu$
associated with the dictionary $\mathbf{A}$ satisfies
\begin{align}
\frac{Kd\mu_{\text{B}}}{1-(d-1)\nu-(K-1)d\mu_{\text{B}}}< 1
\label{eq4}
\end{align}
Since, in practice, measurements are inevitably contaminated with
noise and underlying uncertainties, it is thus important to
understand the effect of measurement noise on the block-sparsity
pattern recovery. Apparently, when noise is present, condition
(\ref{eq4}) alone cannot guarantee the exact recovery of the
block-sparsity pattern. Instead, from (\ref{ratio}), we see that,
to assure $\gamma_{p+1}<1$, we need
\begin{align}
\|\mathbf{A}_{\text{z}}^T(\mathbf{A}_{\text{nz}}^{\dag})^T\|_{2,\infty}+
\frac{\|\mathbf{A}_{\text{z}}^T\mathbf{\tilde{w}}\|_{2,\infty}}
{\|\mathbf{A}_{\text{nz}}^T\mathbf{\tilde{r}}_k\|_{2,\infty}}<1
\label{eq5}
\end{align}
The inequality (\ref{eq5}) has to hold valid for $0\leq k\leq K-1$
in order to ensure that the BOMP algorithm chooses the correct
indices throughout the first $K$ iterations. In the following, we
provide sufficient conditions which guarantee (\ref{eq5}) for
$0\leq k\leq K-1$. The results are summarized as follows.

\newtheorem{theorem}{Theorem}
\begin{theorem} \label{theorem1}
Let
\begin{align}
\omega\triangleq\|\mathbf{A}^T\mathbf{\tilde{w}}\|_{2,\infty}=\max_l\|\mathbf{A}_l^T\mathbf{\tilde{w}}\|_2
\end{align}
denote the maximum correlation between the column block
$\mathbf{A}_l$ and the residual noise component
$\mathbf{\tilde{w}}$. Let
\begin{align}
x_{\text{b,min}}\triangleq\min_{l\in
I_1}\|\mathbf{\tilde{x}}_l\|_2
\end{align}
the minimum $\ell_2$-norm of the non-zero signal block components.
Suppose that the following conditions are satisfied
\begin{align}
\text{(i)} &\quad 1-(d-1)\nu-(2K-1)d\mu_{\text{B}}>0 \nonumber\\
\text{(ii)} &\quad
\frac{[1-(d-1)\nu-(2K-1)d\mu_{\text{B}}]^2}{1-(d-1)\nu-(K-1)d\mu_{\text{B}}}>\frac{\omega}{x_{\text{b,min}}}
\label{theorem1:condition}
\end{align}
then we can guarantee that the BOMP algorithm selects indices from
$I_1$ throughout the first $K$ iterations. If the error tolerance
$\epsilon$ is chosen such that the algorithm stops at the end of
iteration $K$, then the BOMP recovers the exact block-sparsity
pattern.
\end{theorem}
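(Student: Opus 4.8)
The plan is to establish the pointwise sufficient condition (\ref{eq5}) for every index $0\leq k\leq K-1$; uniform validity then guarantees a correct block selection at each of the first $K$ iterations, and with $\epsilon$ chosen between the pure-noise level $\|\mathbf{\tilde{w}}\|_2$ (the residual norm $\|\mathbf{r}_K\|_2$ once the full support is selected) and $\|\mathbf{r}_{K-1}\|_2$, the stopping rule halts after exactly $K$ correct picks, giving exact pattern recovery. Writing $\alpha\triangleq 1-(d-1)\nu-(K-1)d\mu_{\text{B}}$ and $\beta\triangleq 1-(d-1)\nu-(2K-1)d\mu_{\text{B}}$, condition (i) is exactly $\beta>0$, which also yields $\alpha\geq\beta>0$. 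I would bound the two summands of (\ref{eq5}) separately. The first is immediate from the noiseless estimate (\ref{eq6}) of \cite[Lemma 4]{EldarKuppinger10}: $\|\mathbf{A}_{\text{z}}^T(\mathbf{A}_{\text{nz}}^{\dag})^T\|_{2,\infty}\leq Kd\mu_{\text{B}}/\alpha=1-\beta/\alpha$, so the remaining room for the noise term is $\beta/\alpha$. For the numerator of the second summand, since $\mathbf{A}_{\text{z}}$ collects a subset of the column-blocks of $\mathbf{A}$, the mixed norm gives $\|\mathbf{A}_{\text{z}}^T\mathbf{\tilde{w}}\|_{2,\infty}\leq\|\mathbf{A}^T\mathbf{\tilde{w}}\|_{2,\infty}=\omega$. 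Everything therefore reduces to a lower bound $\|\mathbf{A}_{\text{nz}}^T\mathbf{\tilde{r}}_k\|_{2,\infty}\geq\beta\,x_{\text{b,min}}$ on the denominator.

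Proving this lower bound is the main obstacle. Since $\mathbf{\tilde{r}}_k=\mathcal{P}_{\boldsymbol{\Phi}_1}^{\perp}\boldsymbol{\Phi}_2\boldsymbol{\phi}_2$ is orthogonal to the range of $\boldsymbol{\Phi}_1$, the $\boldsymbol{\Phi}_1$-blocks of $\mathbf{A}_{\text{nz}}^T\mathbf{\tilde{r}}_k$ vanish and $\|\mathbf{A}_{\text{nz}}^T\mathbf{\tilde{r}}_k\|_{2,\infty}=\|\mathbf{G}\boldsymbol{\phi}_2\|_{2,\infty}$ with $\mathbf{G}\triangleq\boldsymbol{\Phi}_2^T\mathcal{P}_{\boldsymbol{\Phi}_1}^{\perp}\boldsymbol{\Phi}_2$ (the Schur complement of $\boldsymbol{\Phi}_1^T\boldsymbol{\Phi}_1$ in $\mathbf{A}_{\text{nz}}^T\mathbf{A}_{\text{nz}}$). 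I would then use the submultiplicative/reverse-triangle estimate $\|\mathbf{G}\boldsymbol{\phi}_2\|_{2,\infty}\geq(1-\|\mathbf{I}-\mathbf{G}\|_{2,\infty})\|\boldsymbol{\phi}_2\|_{2,\infty}$ together with $\|\boldsymbol{\phi}_2\|_{2,\infty}\geq x_{\text{b,min}}$ (a max over the surviving nonzero blocks dominates the global minimum block norm), so that it remains to certify $\|\mathbf{I}-\mathbf{G}\|_{2,\infty}\leq 1-\beta=(d-1)\nu+(2K-1)d\mu_{\text{B}}$. The tool here is the readily checked identity that the induced $\ell_2/\ell_{\infty}$ matrix norm equals the largest block-row sum of spectral norms of the constituent blocks. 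Splitting $\mathbf{I}-\mathbf{G}=(\mathbf{I}-\boldsymbol{\Phi}_2^T\boldsymbol{\Phi}_2)+\boldsymbol{\Phi}_2^T\mathcal{P}_{\boldsymbol{\Phi}_1}\boldsymbol{\Phi}_2$, I would bound the first term block-rowwise using $\rho(\mathbf{I}_d-\mathbf{A}_l^T\mathbf{A}_l)\leq(d-1)\nu$ (its off-diagonal entries are within-block correlations, each at most $\nu$) and $\rho(\mathbf{A}_i^T\mathbf{A}_j)\leq d\mu_{\text{B}}$ for the cross blocks, and control the second term through the factorization $\boldsymbol{\Phi}_2^T\mathcal{P}_{\boldsymbol{\Phi}_1}\boldsymbol{\Phi}_2=(\boldsymbol{\Phi}_2^T\boldsymbol{\Phi}_1)(\boldsymbol{\Phi}_1^T\boldsymbol{\Phi}_1)^{-1}(\boldsymbol{\Phi}_1^T\boldsymbol{\Phi}_2)$, estimating each cross factor by its block-coherence row sum and the inverse by the Neumann-type bound $\|(\boldsymbol{\Phi}_1^T\boldsymbol{\Phi}_1)^{-1}\|_{2,\infty}\leq[1-(d-1)\nu-(k-1)d\mu_{\text{B}}]^{-1}$, which is finite precisely by condition (i). The delicate point I expect to consume the most effort is the bookkeeping of block counts across the two terms so that the combined block-coherence contribution does not exceed $(2K-1)d\mu_{\text{B}}$.

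With the denominator bound in hand, (\ref{eq5}) becomes $1-\beta/\alpha+\omega/(\beta x_{\text{b,min}})<1$, which rearranges to $\omega/x_{\text{b,min}}<\beta^2/\alpha$, i.e. exactly condition (ii) in (\ref{theorem1:condition}). Condition (i) is what makes every step legitimate: it keeps $\alpha$ and $\beta$ positive, guarantees $\mathbf{A}_{\text{nz}}$ and $\boldsymbol{\Phi}_1$ full column rank so that all pseudo-inverses and projections are well defined, and makes the Neumann bound above converge. Because none of these estimates depend on which particular $k$ correct blocks were selected earlier, the argument is uniform over $0\leq k\leq K-1$, so $\gamma_{k+1}<1$ throughout and the BOMP selects only indices from $I_1$ during the first $K$ iterations, completing the proof.
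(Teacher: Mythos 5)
Your proposal is correct, and it reaches the same two upper bounds that drive the paper's argument --- the Eldar--Kuppinger bound (\ref{eq6}) for the first summand of (\ref{eq5}) and the lower bound $\|\mathbf{A}_{\text{nz}}^T\mathbf{\tilde{r}}_k\|_{2,\infty}\geq[1-(d-1)\nu-(2K-1)d\mu_{\text{B}}]x_{\text{b,min}}$ for the denominator of the second --- but it gets to the denominator bound by a genuinely different route. The paper splits $\mathbf{A}_{\text{nz}}^T\mathbf{\tilde{r}}_k$ by the triangle inequality into $\|\mathbf{A}_{\text{nz}}^T\boldsymbol{\Phi}_2\boldsymbol{\phi}_2\|_{2,\infty}-\|\mathbf{A}_{\text{nz}}^T\mathcal{P}_{\boldsymbol{\Phi}_1}\boldsymbol{\Phi}_2\boldsymbol{\phi}_2\|_{2,\infty}$ and then needs an entire auxiliary argument (Appendix \ref{appB}) to show that the maximum in the subtracted term is attained on the $\boldsymbol{\Phi}_1$-blocks, where $\mathcal{P}_{\boldsymbol{\Phi}_1}$ can be stripped off. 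You instead observe that $\boldsymbol{\Phi}_1^T\mathbf{\tilde{r}}_k=\mathbf{0}$ exactly, reduce everything to the Schur complement $\mathbf{G}=\boldsymbol{\Phi}_2^T\mathcal{P}_{\boldsymbol{\Phi}_1}^{\perp}\boldsymbol{\Phi}_2$, and bound $\|\mathbf{I}-\mathbf{G}\|_{2,\infty}$ by block-row sums plus a Neumann estimate on $(\boldsymbol{\Phi}_1^T\boldsymbol{\Phi}_1)^{-1}$. I checked the bookkeeping you were worried about: your route gives $\|\mathbf{I}-\mathbf{G}\|_{2,\infty}\leq(d-1)\nu+(K-k-1)d\mu_{\text{B}}+k(K-k)d^2\mu_{\text{B}}^2/[1-(d-1)\nu-(k-1)d\mu_{\text{B}}]$, and since condition (i) forces $1-(d-1)\nu-(k-1)d\mu_{\text{B}}>(2K-k)d\mu_{\text{B}}$, the last term is below $k(K-k)d\mu_{\text{B}}/(2K-k)\leq kd\mu_{\text{B}}$, so the total stays under $(d-1)\nu+(2K-1)d\mu_{\text{B}}$ with room to spare; your approach is thus not only valid but slightly tighter than the paper's, and it eliminates Appendix \ref{appB} altogether. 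Two small points: the induced mixed norm only satisfies $\|\mathbf{X}\|_{2,\infty}\leq\max_i\sum_j\rho(\mathbf{X}_{ij})$ rather than equality (the inequality is all you use, so nothing breaks), and the final rearrangement $1-\beta/\alpha+\omega/(\beta x_{\text{b,min}})<1\Leftrightarrow\omega/x_{\text{b,min}}<\beta^2/\alpha$ matches the paper's (\ref{appA:eq5}) exactly.
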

\begin{proof}
See Appendix \ref{appA}.
\end{proof}

Theorem \ref{theorem1} is a generalization of the results
presented in \cite{EldarKuppinger10} which considered block-sparse
signal recovery from noise-free measurements. To see this, for the
noiseless case, we have $\omega=0$, and hence the condition
(\ref{theorem1:condition}) is simplified as
\begin{align}
1-(d-1)\nu-(2K-1)d\mu_{\text{B}}>0
\end{align}
which is exactly the recovery condition provided in
\cite{EldarKuppinger10} for block-sparse signal recovery. On the
other hand, for the noisy case, the success of the BOMP algorithm
not only depends on the block-coherence $\mu_{\text{B}}$ and the
sub-coherence $\nu$, but also depends on the ratio of the maximum
correlation (between the column block $\mathbf{A}_l$ and the
residual noise component $\mathbf{\tilde{w}}$) to the minimum
$\ell_2$-norm of the nonzero signal block components
$\mathbf{\tilde{x}}_{l},\forall l\in I_1$. The importance of the
minimum nonzero signal component in sparsity pattern recovery has
been highlighted in \cite{Wainwright09,FletcherRangan09}. In
particular, \cite{Wainwright09} showed that both the sufficient
and necessary conditions require control of the minimum nonzero
signal component. Our result suggests that, for block-sparse
signal recovery, the minimum $\ell_2$-norm of the nonzero signal
block components, instead of the minimum magnitude of an entry, is
the key quantity that controls the block subset selection.

Also, we observe that the left-hand side of the second condition
in (\ref{theorem1:condition}) is strictly less than one. Therefore
the ratio $\omega/x_{\text{b,min}}$ cannot be greater than one,
otherwise the condition cannot be met, irrespective of the choice
of the sub-coherence $\nu$ and the block-coherence
$\mu_{\text{B}}$. The deterministic condition
(\ref{theorem1:condition}), however, guarantees recovery of the
sparsity pattern under the worst-case scenario and therefore is
very pessimistic. If we take a probabilistic analysis (as in
\cite{FletcherRangan09a}) that ensures a probabilistic recovery,
the condition can be significantly relaxed. This could be a
direction of our future study.

\section{Discussions}
We note that in this paper, as in \cite{EldarKuppinger10},
block-sparsity is explicitly exploited to yield a more relaxed
condition imposed on the measurement matrix, and therefore lead to
a guaranteed recovery for a potentially higher sparsity level. If
the block-sparse signal is treated as a conventional $Kd$-sparse
vector without exploiting knowledge of the block-sparsity
structure, sufficient conditions for exact sparsity pattern
recovery using OMP are given in \cite[Theorem 18]{Tropp06} and can
be formulated as (by combining the first and the third equation in
\cite[Theorem 18]{Tropp06})
\begin{align}
\text{(i)}&\quad 1-2Kd\mu>0 \nonumber\\
\text{(ii)}&\quad
\frac{(1-2Kd\mu)^2}{1-Kd\mu}>\frac{\|\mathbf{A}^T\mathbf{\tilde{w}}\|_{\infty}}{x_{\text{min}}}
\label{recovery-condition-OMP}
\end{align}
where $x_{\text{min}}$ denotes the minimum magnitude of the
nonzero signal elements in $\mathbf{\tilde{x}}_{\text{nz}}$. When
$d=1$, block-sparsity reduces to conventional sparsity and we have
$\nu=0$, $\mu_{\text{B}}=\mu$. The condition
(\ref{theorem1:condition}) is simplified as
\begin{align}
\text{(i)}&\quad 1-(2K-1)d\mu>0 \nonumber\\
\text{(ii)}&\quad
\frac{(1-(2K-1)d\mu)^2}{1-(K-1)d\mu}>\frac{\|\mathbf{A}^T\mathbf{\tilde{w}}\|_{\infty}}{x_{\text{min}}}
\label{eq8}
\end{align}
which is the same as (\ref{recovery-condition-OMP}) except that
$2K$ and $K$ in the numerator and denominator are replaced by
$2K-1$ and $K-1$, respectively (It can be easily verified that
(\ref{eq8}) is slightly loose than
(\ref{recovery-condition-OMP})). When $d>1$, in the special case
that the columns of $\mathbf{A}_l$ are orthonormal for each $l$,
we have $\nu=0$ and therefore the recovery condition
(\ref{theorem1:condition}) becomes
\begin{align}
\text{(i)} &\quad 1-(2K-1)d\mu_{\text{B}}>0 \nonumber\\
\text{(ii)} &\quad
\frac{[1-(2K-1)d\mu_{\text{B}}]^2}{1-(K-1)d\mu_{\text{B}}}>\frac{\omega}{s_{\text{min}}}
\label{recovery-condition-BOMP}
\end{align}
This recovery condition, (\ref{recovery-condition-BOMP}), is less
restrictive than (\ref{recovery-condition-OMP}) since we have
\begin{align}
\frac{[1-(2K-1)d\mu_{\text{B}}]^2}{1-(K-1)d\mu_{\text{B}}}>&
\frac{(1-2Kd\mu_{\text{B}})^2}{1-Kd\mu_{\text{B}}}
\stackrel{(a)}{\geq}\frac{(1-2Kd\mu)^2}{1-Kd\mu}\nonumber\\
>&\frac{\|\mathbf{A}^T\mathbf{\tilde{w}}\|_{\infty}}{x_{\text{min}}}\stackrel{(b)}{\geq}
\frac{\omega}{x_{\text{b,min}}} \label{eq7}
\end{align}
where $(a)$ comes from the fact that $1-2Kd\mu>0$ and
$\mu_{\text{B}}\leq\mu$ \cite[Proposition 2]{EldarKuppinger10},
$(b)$ follows from
$\omega\leq\sqrt{d}\|\mathbf{A}^T\mathbf{\tilde{w}}\|_{\infty}$
and $x_{\text{b,min}}\geq\sqrt{d}x_{\text{min}}$. We see that
through exploiting the block-sparsity, the sparsity pattern
recovery condition is relaxed and we can guarantee a recovery of
sparsity pattern with a higher sparsity level. A close examination
of (\ref{eq7}) reveals that this improvement comes from two
aspects. First, the measurement matrix requires a less restrictive
mutual coherence condition since $\mu_{\text{B}}\leq\mu$. Second,
for the same signal, noise, and measurement matrix, the quantity
$\omega/x_{\text{b,min}}$ is always smaller than or equal to
$\|\mathbf{A}^T\mathbf{\tilde{w}}\|_{\infty}/x_{\text{min}}$,
meaning that exploiting block-sparsity can improve the ability of
detecting weak signals buried in noise.

If the individual blocks $\mathbf{A}_l$ are, however, not
orthonormal, then $\nu>0$, and $\nu$ has to be small in order to
result in a performance gain for block-sparsity recovery as
compared with the conventional sparse recovery. We can also follow
the orthogonalization approach \cite{EldarKuppinger10} to analyze
the general non-orthonormal case. We orthogonalize the individual
blocks $\mathbf{A}_l=\mathbf{\tilde{A}}_l\mathbf{V}_l$, in which
$\mathbf{\tilde{A}}_l$ consists of orthonormal columns, and
$\mathbf{V}_l$ is an invertible matrix. The original dictionary
can therefore be written as
$\mathbf{A}=\mathbf{\tilde{A}}\mathbf{V}$, where $\mathbf{V}$ is a
block-diagonal matrix with blocks $\mathbf{V}_l$. Clearly,
orthogonalization preserves the block-sparsity level. The
comparison that is meaningful here is between the recovery based
on the original model without exploiting block-sparsity and the
recovery based on the orthogonalized model taking block-sparsity
into account. For the orthogonalized dictionary
$\mathbf{\tilde{A}}$, we have $\nu(\mathbf{\tilde{A}})=0$.
Therefore we are only concerned about the relation between $\mu$
before orthogonalization and $\mu_{\text{B}}$ after
orthogonalization, which are denoted by $\mu(\mathbf{A})$ and
$\mu_{\text{B}}(\mathbf{\tilde{A}})$ respectively. Although an
exact relation between $\mu(\mathbf{A})$ and
$\mu_{\text{B}}(\mathbf{\tilde{A}})$ is difficult to derive, it
has been shown in \cite{EldarKuppinger10} that if $d>RL/(L-R)$,
then we have
$\mu(\mathbf{A})\geq\mu_{\text{B}}(\mathbf{\tilde{A}})$. Hence
even for general dictionaries, exploiting block-sparsity still
leads to a guaranteed sparsity pattern recovery for a potentially
higher sparsity level by properly choosing the number of
measurements to satisfy $d>RL/(L-R)$.

We explore the connection and difference between our work and
\cite{TroppGilbert06,GribonvalRauhut04}. In
\cite{TroppGilbert06,GribonvalRauhut04}, the problem of
simultaneous sparse approximation has been extensively studied and
many interesting and elegant results were obtained under different
performance metrics. Among them, the result most related to our
work is \cite[Theorem 5.3]{TroppGilbert06}, which presents a
sufficient condition for simultaneous sparse pattern recovery. The
difference between our work and
\cite{TroppGilbert06,GribonvalRauhut04} lies in two aspects.
First, the problem considered in this paper is more general than
that of \cite{TroppGilbert06,GribonvalRauhut04} since simultaneous
sparse approximation is a special form of block-sparse signal
recovery with the measurement matrix having a block-diagonal
structure and identical diagonal blocks. Second, block-sparsity is
exploited in our paper to improve the recovery ability of dealing
with a higher sparsity level, whereas for
\cite{TroppGilbert06,GribonvalRauhut04}, the simultaneous sparse
approximation does not lead to a more relaxed condition on the
dictionary as compared with the conventional single vector sparse
approximation.

\section{Numerical Results}
We present numerical results to illustrate the sparsity pattern
recovery performance of the BOMP algorithm. In the simulations,
the dictionary is randomly generated with each entry independently
drawn from Gaussian distribution with zero mean and unit variance.
We then normalize each column of the dictionary to satisfy the
unit-norm constraint. The dictionary is divided into consecutive
blocks of length $d$. The support set of the block-sparse signal
is randomly chosen according to a uniform distribution, and the
signals on the support set are i.i.d. Gaussian random variables
with zero mean and unit variance. The measurement noise vector is
randomly generated with each entry drawn from Gaussian
distribution with zero mean and variance $\sigma_w^2$.

To show the effectiveness of the BOMP algorithm, we compare it
with the OMP algorithm that does not take block-sparsity into
account. Fig. \ref{fig1} shows the sparsity pattern recovery
success rate as a function of the block-sparsity level, $K$. The
sparsity pattern recovery is considered successful only if the
algorithm determines all the correct support indices in the first
$K$ steps for the BOMP or in the first $Kd$ steps for the OMP,
supposing the block-sparsity level, $K$, is known \emph{a priori}.
The results are averaged over $1000$ Monte Carlo runs, with the
dictionary, the signal, and the noise randomly generated for each
run. From Fig. \ref{fig1}, we observe that for both the BOMP and
the OMP algorithms, the success rate decreases as the
block-sparsity level, $K$, increases. Also, it can be seen that
the BOMP algorithm presents a significant performance improvement
over the OMP. The result corroborate our theoretical claim that
exploiting block-sparsity can lead to an improved recovery
ability. Fig. \ref{fig3} depicts the success rate of the BOMP
algorithm under different noise power levels. We see that as the
noise power increases, the recovery performance degrades. This
observation is quite intuitive and coincides with our theoretical
result since a higher noise power calls for a stricter requirement
on the measurement matrix in order to satisfy the condition
(\ref{theorem1:condition}).

\begin{figure}[!t]
\centering
\includegraphics[width=9cm]{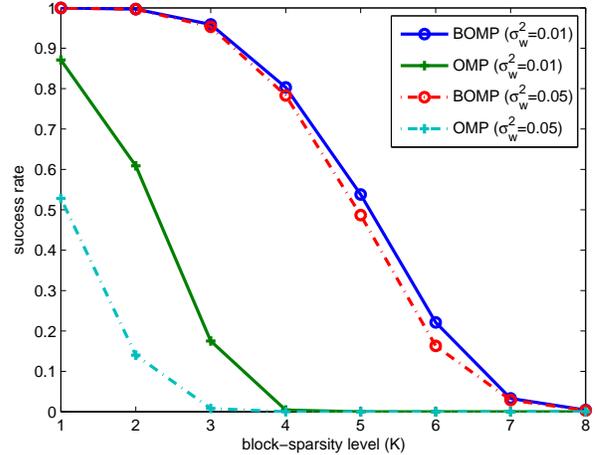}
\caption{Sparsity pattern recovery success rates of OMP and BOMP
algorithms vs. block sparsity level, $m=40$, $n=400$, $d=4$, and
$L=100$.} \label{fig1}
\end{figure}

\begin{figure}[!t]
\centering
\includegraphics[width=9cm]{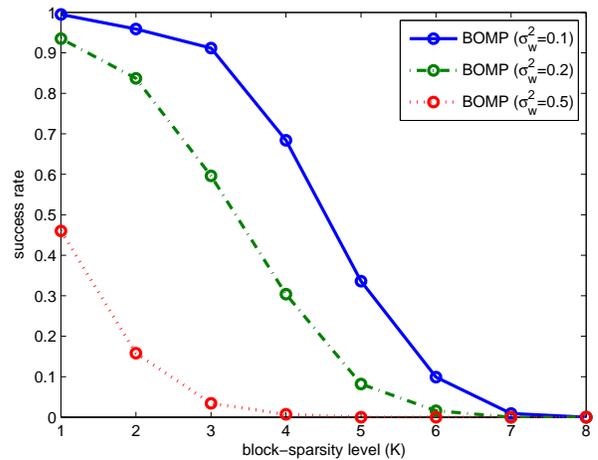}
\caption{Sparsity pattern recovery success rate of BOMP algorithm
vs. block sparsity level, $m=40$, $n=400$, $d=4$, and $L=100$.}
\label{fig3}
\end{figure}

\section{Conclusion}
We studied the problem of recovering the sparsity pattern of
block-sparse signals from noise-corrupted measurements. Our
results showed that even in the presence of noise, the
block-sparsity pattern can still be completely recovered via a
block-version of the OMP algorithm when certain conditions are
satisfied. Also, our analysis revealed that exploiting
block-sparsity can lead to a guaranteed recovery for a potentially
higher sparsity level. This theoretical claim was also
corroborated by our numerical results.

\useRomanappendicesfalse
\appendices

\section{Proof of Theorem \ref{theorem1}} \label{appA}
To prove Theorem \ref{theorem1}, we only need to prove that
(\ref{eq5}) holds for $0\leq k\leq K-1$ given the condition
(\ref{theorem1:condition}) satisfied. To this goal, we first
derive an upper bound on the second term on the left-hand side
(L.H.S.) of (\ref{eq5}).

The numerator of the second term on the L.H.S. of (\ref{eq5}) is
upper bounded by
\begin{align}
\|\mathbf{A}_{\text{z}}^T\mathbf{\tilde{w}}\|_{2,\infty}\leq
\|\mathbf{A}^T\mathbf{\tilde{w}}\|_{2,\infty}=\omega
\end{align}
To derive an upper bound on the second term on the L.H.S. of
(\ref{eq5}), we need to obtain a lower bound on its denominator in
terms of the block coherence parameter $\mu_{\text{B}}$ and the
sub-coherence parameter $\nu$. We have
\begin{align}
\|\mathbf{A}_{\text{nz}}^T\mathbf{\tilde{r}}_k\|_{2,\infty}=&
\|\mathbf{A}_{\text{nz}}^T(\boldsymbol{\Phi}_2\boldsymbol{\phi}_2-\mathcal{P}_{\boldsymbol{\Phi}_1}
\boldsymbol{\Phi}_2\boldsymbol{\phi}_2)\|_{2,\infty} \nonumber\\
\stackrel{(a)}{\geq}&
\|\mathbf{A}_{\text{nz}}^T\boldsymbol{\Phi}_2\boldsymbol{\phi}_2\|_{2,\infty}-
\|\mathbf{A}_{\text{nz}}^T\mathcal{P}_{\boldsymbol{\Phi}_1}
\boldsymbol{\Phi}_2\boldsymbol{\phi}_2\|_{2,\infty}
\label{appA:eq1}
\end{align}
where $(a)$ comes from the general mixed $\ell_2/\ell_p$-norm
triangle inequality. The first term on the right-hand side
(R.H.S.) of (\ref{appA:eq1}) can be further lower bounded as
\begin{align}
&\|\mathbf{A}_{\text{nz}}^T\boldsymbol{\Phi}_2\boldsymbol{\phi}_2\|_{2,\infty}\nonumber\\
=&\max_{i\in I_1}\|\mathbf{A}_i^T
\boldsymbol{\Phi}_2\boldsymbol{\phi}_2\|_2 =\max_{i\in
I_1}\bigg\|\sum_{j=k+1}^K\mathbf{A}_i^T\mathbf{A}_{l_j}\mathbf{\tilde{x}}_{l_j}\bigg\|_2
\nonumber\\
\geq &\max_{i\in\{l_{k+1},\ldots,l_K\}}
\bigg\{\|\mathbf{A}_{i}^T\mathbf{A}_{i}\mathbf{\tilde{x}}_{i}\|_2
\nonumber\\
&\qquad \qquad\qquad-\sum_{\{j|l_j\neq i, k+1\leq j\leq
K\}}\|\mathbf{A}_{i}^T\mathbf{A}_{l_j}\mathbf{\tilde{x}}_{l_j}\|_2\bigg\}
\nonumber\\
\stackrel{(a)}{\geq} & \max_{i\in\{l_{k+1},\ldots,l_K\}}
\bigg\{(1-(d-1)\nu)\|\mathbf{\tilde{x}}_{i}\|_2 \nonumber\\
& \qquad \qquad \qquad -d\mu_{\text{B}}\sum_{\{j|l_j\neq i,
k+1\leq j\leq
K\}}\|\mathbf{\tilde{x}}_{l_j}\|_2\bigg\} \nonumber\\
\geq &
(1-(d-1)\nu-(K-k-1)d\mu_{\text{B}})\max_{i\in\{l_{k+1},\ldots,l_K\}}\|\mathbf{\tilde{x}}_{i}\|_2
\label{appA:eq2})
\end{align}
where $(a)$ comes from the fact that
$\lambda_{\text{min}}(\mathbf{A}_{i}^T\mathbf{A}_{i})\geq
1-(d-1)\nu$ (this fact comes directly from the Gershgorin Circle
Theorem), and $\rho(\mathbf{A}_{i}^T\mathbf{A}_{j})\leq
d\mu_{\text{B}}$ for $i\neq j$. On the other hand, the second term
on the R.H.S. of (\ref{appA:eq1}) can be upper bounded by (Please
see Appendix \ref{appB} for the detailed derivation)
\begin{align}
\|\mathbf{A}_{\text{nz}}^T\mathcal{P}_{\boldsymbol{\Phi}_1}
\boldsymbol{\Phi}_2\boldsymbol{\phi}_2\|_{2,\infty} \leq&
d\mu_{\text{B}}(K-k)\max_{i\in\{l_{k+1},\ldots,l_K\}}\|\mathbf{\tilde{x}}_{i}\|_2
\label{appA:eq3}
\end{align}
Combining (\ref{appA:eq1})--(\ref{appA:eq3}), (\ref{appA:eq1}) is
further lower bounded by
\begin{align}
&\|\mathbf{A}_{\text{nz}}^T\mathbf{\tilde{r}}_k\|_{2,\infty}
\nonumber\\
\geq&(1-(d-1)\nu-(2K-2k-1)d\mu_{\text{B}})\max_{i\in\{l_{k+1},\ldots,l_K\}}\|\mathbf{\tilde{x}}_{i}\|_2
\nonumber\\
\geq&(1-(d-1)\nu-(2K-2k-1)d\mu_{\text{B}})x_{\text{b,min}}
\label{appA:eq6}
\end{align}
Since (\ref{eq4}) is a necessary condition for (\ref{eq5}), we
should always have $1-(d-1)\nu-(2K-1)d\mu_{\text{B}}>0$. Therefore
we can guarantee that the above derived lower bound is positive.
Consequently an upper bound on the second term on the L.H.S. of
(\ref{eq5}) can be derived and given as
\begin{align}
\frac{\|\mathbf{A}_{\text{z}}^T\mathbf{\tilde{w}}\|_{2,\infty}}
{\|\mathbf{A}_{\text{nz}}^T\mathbf{\tilde{r}}_k\|_{2,\infty}}\leq&\frac{\omega}
{(1-(d-1)\nu-(2K-2k-1)d\mu_{\text{B}})x_{\text{b,min}}} \nonumber\\
\leq& \frac{\omega}
{(1-(d-1)\nu-(2K-1)d\mu_{\text{B}})x_{\text{b,min}}}\label{appA:eq4}
\end{align}
We see that the first and the second term on the L.H.S. of
(\ref{eq5}) are respectively upper bounded by (\ref{eq6}) and
(\ref{appA:eq4}). Therefore (\ref{eq5}) is guaranteed if the
summation of these two upper bounds are smaller than unity, i.e.
\begin{align}
&\frac{Kd\mu_{\text{B}}}{1-(d-1)\nu-(K-1)d\mu_{\text{B}}}\nonumber\\
&+\frac{\omega}
{(1-(d-1)\nu-(2K-1)d\mu_{\text{B}})x_{\text{b,min}}}<1
\label{appA:eq5}
\end{align}
A further transformation easily shows that (\ref{appA:eq5}) and
(\ref{theorem1:condition}) are equivalent (note that the condition
$1-(d-1)\nu-(2K-1)d\mu_{\text{B}}>0$ has to be explicitly
indicated to assure (\ref{eq4}) and to assure the positiveness of
the lower bound (\ref{appA:eq6})). The proof is completed here.

\section{Derivation of Equation (\ref{appA:eq3})} \label{appB}
Clearly we have
\begin{align}
\|\mathbf{A}_{\text{nz}}^T\mathcal{P}_{\boldsymbol{\Phi}_1}
\boldsymbol{\Phi}_2\boldsymbol{\phi}_2\|_{2,\infty} =\max_{i\in
I_1}\|\mathbf{A}_i^T \mathcal{P}_{\boldsymbol{\Phi}_1}
\boldsymbol{\Phi}_2\boldsymbol{\phi}_2\|_2
\end{align}
We consider two different cases. If $\mathbf{A}_i$ is a
column-block of $\boldsymbol{\Phi}_1$, i.e.
$i\in\{l_1,\ldots,l_k\}$, then for any index $i$, we have
\begin{align}
&\|\mathbf{A}_i^T\mathcal{P}_{\boldsymbol{\Phi}_1}
\boldsymbol{\Phi}_2\boldsymbol{\phi}_2\|_2\stackrel{(a)}{=}\|\mathbf{A}_i^T\boldsymbol{\Phi}_2\boldsymbol{\phi}_2\|_2
=\bigg\|\sum_{j=k+1}^K\mathbf{A}_i^T\mathbf{A}_{l_j}\mathbf{\tilde{x}}_{l_j}
\bigg\|_2 \nonumber\\
\leq&\sum_{j=k+1}^K\|\mathbf{A}_i^T\mathbf{A}_{l_j}\mathbf{\tilde{x}}_{l_j}
\|_2 \leq d\mu_{\text{B}}\sum_{j=k+1}^K\|\mathbf{\tilde{x}}_{l_j}
\|_2 \nonumber\\
\leq&d\mu_{\text{B}}(K-k)\max_{i\in\{l_{k+1},\ldots,l_K\}}\|\mathbf{\tilde{x}}_{i}\|_2
\label{appB:eq1}
\end{align}
where $(a)$ comes from the fact that
$\boldsymbol{\Phi}_1^T\mathcal{P}_{\boldsymbol{\Phi}_1}=\boldsymbol{\Phi}_1^T\boldsymbol{\Phi}_1
(\boldsymbol{\Phi}_1^T\boldsymbol{\Phi}_1)^{-1}\boldsymbol{\Phi}_1^T=\boldsymbol{\Phi}_1^T$,
and therefore $\mathbf{A}_i^T
\mathcal{P}_{\boldsymbol{\Phi}_1}=\mathbf{A}_i^T$ for
$i\in\{l_1,\ldots,l_k\}$. On the other hand, if $\mathbf{A}_i$ is
a column-block of $\boldsymbol{\Phi}_2$, i.e.
$i\in\{l_{k+1},\ldots,l_K\}$. We show that
\begin{align}
\max_{i\in\{l_1,\ldots,l_k\}}\|\mathbf{A}_i^T\mathcal{P}_{\boldsymbol{\Phi}_1}
\boldsymbol{\Phi}_2\boldsymbol{\phi}_2\|_2\geq\max_{i\in\{l_{k+1},\ldots,l_K\}}
\|\mathbf{A}_i^T\mathcal{P}_{\boldsymbol{\Phi}_1}
\boldsymbol{\Phi}_2\boldsymbol{\phi}_2\|_2 \label{appB:eq2}
\end{align}
To this goal, let
$\mathbf{z}\triangleq\boldsymbol{\Phi}_1^{\dag}\boldsymbol{\Phi}_2\boldsymbol{\phi}_2=[\mathbf{z}_1^T
\phantom{0}\ldots\phantom{0}\mathbf{z}_k^T]^T$, the term on the
L.H.S. of (\ref{appB:eq2}) is lower bounded as
\begin{align}
&\max_{i\in\{l_1,\ldots,l_k\}}\|\mathbf{A}_i^T\mathcal{P}_{\boldsymbol{\Phi}_1}
\boldsymbol{\Phi}_2\boldsymbol{\phi}_2\|_2 =
\max_{i\in\{l_1,\ldots,l_k\}}\|\mathbf{A}_i^T\boldsymbol{\Phi}_1\mathbf{z}\|_2
\nonumber\\
=&\max_{i\in\{l_1,\ldots,l_k\}}\bigg\|\sum_{j=1}^k\mathbf{A}_i^T\mathbf{A}_{l_j}\mathbf{z}_j\bigg\|_2
\nonumber\\
\stackrel{(a)}{\geq}&\|\mathbf{A}_{l_q}^T\mathbf{A}_{l_q}\mathbf{z}_{q}\|_2-
\sum_{\{j|j\neq q,1\leq j\leq
k\}}\|\mathbf{A}_{l_q}^T\mathbf{A}_{l_j}\mathbf{z}_j\|_2
\nonumber\\
\geq&(1-(d-1)\nu)\|\mathbf{z}_q\|_2-d\mu_{\text{B}}
\sum_{\{j|j\neq q,1\leq j\leq k\}}\|\mathbf{z}_j\|_2
\nonumber\\
\geq&(1-(d-1)\nu-(k-1)d\mu_{\text{B}})\|\mathbf{z}_q\|_2
\label{appB:eq3}
\end{align}
where in $(a)$, the index $q$ is chosen such that $\mathbf{z}_{q}$
has the maximum $\ell_2$-norm among $\{\mathbf{z}_{i}\}_{i=1}^k$.
The term on the R.H.S. of (\ref{appB:eq2}) is upper bounded by
\begin{align}
&\max_{i\in\{l_{k+1},\ldots,l_K\}}
\|\mathbf{A}_i^T\mathcal{P}_{\boldsymbol{\Phi}_1}
\boldsymbol{\Phi}_2\boldsymbol{\phi}_2\|_2=\max_{i\in\{l_{k+1},\ldots,l_K\}}
\|\mathbf{A}_i^T\boldsymbol{\Phi}_1\mathbf{z}\|_2 \nonumber\\
=&\max_{i\in\{l_{k+1},\ldots,l_K\}}\bigg\|\sum_{j=1}^k\mathbf{A}_i^T\mathbf{A}_{l_j}\mathbf{z}_j\bigg\|_2
\leq\max_{i\in\{l_{k+1},\ldots,l_K\}}\sum_{j=1}^k\|\mathbf{A}_i^T\mathbf{A}_{l_j}\mathbf{z}_j\|_2
\nonumber\\
\leq&d\mu_{\text{B}}\sum_{j=1}^k\|\mathbf{z}_j\|_2 \leq
kd\mu_{\text{B}}\|\mathbf{z}_q\|_2 \label{appB:eq4}
\end{align}
Since we have $1-(d-1)\nu-(2K-1)d\mu_{\text{B}}>0$ in order to
assure the condition (\ref{eq4}) to be satisfied, we can easily
verify that the following always holds for $0\leq k<K$
\begin{align}
(1-(d-1)\nu-(k-1)d\mu_{\text{B}})>kd\mu_{\text{B}}
\label{appB:eq5}
\end{align}
The inequality (\ref{appB:eq2}) comes directly by combining
(\ref{appB:eq3}--\ref{appB:eq5}). Therefore the second term on the
R.H.S. of (\ref{appA:eq1}) is upper bounded by
\begin{align}
\|\mathbf{A}_{\text{nz}}^T\mathcal{P}_{\boldsymbol{\Phi}_1}
\boldsymbol{\Phi}_2\boldsymbol{\phi}_2\|_{2,\infty}=&\max_{i\in
I_1}\|\mathbf{A}_i^T \mathcal{P}_{\boldsymbol{\Phi}_1}
\boldsymbol{\Phi}_2\boldsymbol{\phi}_2\|_2 \nonumber\\
=&\max_{i\in\{l_1,\ldots,l_k\}}\|\mathbf{A}_i^T\mathcal{P}_{\boldsymbol{\Phi}_1}
\boldsymbol{\Phi}_2\boldsymbol{\phi}_2\|_2 \nonumber\\
\leq&
d\mu_{\text{B}}(K-k)\max_{i\in\{l_{k+1},\ldots,l_K\}}\|\mathbf{\tilde{x}}_{i}\|_2
\end{align}
where the last inequality comes from (\ref{appB:eq1}).


\bibliography{MyBibTex}

\begin{thebibliography}{10}
\providecommand{\url}[1]{#1}
\csname url@rmstyle\endcsname
\providecommand{\newblock}{\relax}
\providecommand{\bibinfo}[2]{#2}
\providecommand\BIBentrySTDinterwordspacing{\spaceskip=0pt\relax}
\providecommand\BIBentryALTinterwordstretchfactor{4}
\providecommand\BIBentryALTinterwordspacing{\spaceskip=\fontdimen2\font plus
\BIBentryALTinterwordstretchfactor\fontdimen3\font minus
  \fontdimen4\font\relax}
\providecommand\BIBforeignlanguage[2]{{%
\expandafter\ifx\csname l@#1\endcsname\relax
\typeout{** WARNING: IEEEtran.bst: No hyphenation pattern has been}%
\typeout{** loaded for the language `#1'. Using the pattern for}%
\typeout{** the default language instead.}%
\else
\language=\csname l@#1\endcsname
\fi
#2}}

\bibitem{Miller90}
A.~J. Miller, \emph{Subset selection in regression}.\hskip 1em plus 0.5em minus
  0.4em\relax New York: Chapman\&Hall, 1990.

\bibitem{Meinshausen06}
N.~Meinshausen and P.~B\"{u}hlmann, ``High-dimensional graphs and variable
  selection with the {L}asso,'' \emph{Ann. Statist.}, pp. 1436--1462, 2006.

\bibitem{CandesTao05}
E.~Cand\'{e}s and T.~Tao, ``Decoding by linear programming,'' \emph{IEEE Trans.
  Information Theory}, no.~12, pp. 4203--4215, Dec. 2005.

\bibitem{Fuchs05}
J.~J. Fuchs, ``Recovery of exact sparse representations in the presence of
  bounded noise,'' \emph{IEEE Trans. Information Theory}, vol.~51, no.~10, pp.
  3601--3608, Oct. 2005.

\bibitem{Tropp06}
J.~A. Tropp, ``Just relax: convex programming methods for identifying sparse
  signals in noise,'' \emph{IEEE Trans. Information Theory}, vol.~52, no.~3,
  pp. 1030--1051, Mar. 2006.

\bibitem{Wainwright09a}
M.~J. Wainwright, ``Sharp thresholds for high-dimensional and noisy sparsity
  recovery using $\ell_1$- constrained quadratic programming (lasso),''
  \emph{IEEE Trans. Information Theory}, vol.~55, no.~5, pp. 2183--2202, May
  2009.

\bibitem{Wainwright09}
------, ``Information-theoretic limits on sparsity recovery in the
  high-dimensional and noisy setting,'' \emph{IEEE Trans. Information Theory},
  vol.~55, no.~12, pp. 5728--5741, Dec. 2009.

\bibitem{FletcherRangan09}
A.~K. Fletcher, S.~Rangan, and V.~K. Goyal, ``Necessary and sufficient
  conditions for sparsity pattern recovery,'' \emph{IEEE Trans. Information
  Theory}, vol.~55, no.~12, pp. 5758--5772, Dec. 2009.

\bibitem{Mishali09}
M.~Mishali and Y.~C. Eldar, ``Blind multi-band signal reconstruction:
  compressed sensing for analog signals,'' \emph{IEEE Trans. Signal
  Processing}, vol.~57, no.~3, pp. 993--1009, Mar. 2009.

\bibitem{GribonvalBacry03}
R.~Gribonval and E.~Bacry, ``Harmonic decomposition of audio signals with
  matching pursuit,'' \emph{IEEE Trans. Signal Processing}, vol.~51, no.~1, pp.
  101--111, Jan. 2003.

\bibitem{EldarKuppinger10}
Y.~C. Eldar, P.~Kuppinger, and H.~B\"{o}lcskei, ``Block-sparse signals:
  uncertainty relations and efficient recovery,'' \emph{IEEE Trans. Information
  Theory}, vol.~58, no.~6, pp. 3042--3054, June 2010.

\bibitem{EldarMishali09}
Y.~C. Eldar and M.~Mishali, ``Robust recovery of signals from a structured
  union of subspaces,'' \emph{IEEE Trans. Information Theory}, vol.~55, no.~11,
  pp. 5302--5316, Nov. 2009.

\bibitem{BaraniukCevher10}
R.~G. Baraniuk, V.~Cevher, M.~F. Duarte, and C.~Hegde, ``Model-based
  compressive sensing,'' \emph{IEEE Trans. Information Theoy}, vol.~56, no.~4,
  pp. 1982--2001, Apr. 2010.

\bibitem{ChenBillings89}
S.~Chen, S.~A. Billings, and W.~Luo, ``Orthogonal least squares methods and
  their application to non-linear system identification,'' \emph{Int. J.
  Control}, vol.~50, no.~5, pp. 1873--1896, Nov. 1989.

\bibitem{PatiRezaiifar93}
Y.~C. Pati, R.~Rezaiifar, and P.~S. Krishnaprasad, ``Orthogonal matching
  pursuit: recursive function approximation with applications to wavelet
  decomposition,'' in \emph{Proc. 27th Annu. Asilomar Conf. Signals, Systems,
  and Computers}, vol.~1, Pacific Grove, CA, Nov. 1993, pp. 40--44.

\bibitem{Tropp04}
J.~A. Tropp, ``Greed is good: algorithmic results for sparse approximation,''
  \emph{IEEE Trans. Information Theory}, vol.~50, no.~10, pp. 2231--2242, Oct.
  2004.

\bibitem{TroppGilbert07}
J.~A. Tropp and A.~C. Gilbert, ``Signal recovery from random measurements via
  orthogonal matching pursuit,'' \emph{IEEE Trans. Information Theory},
  vol.~53, no.~12, pp. 4655--4666, Dec. 2007.

\bibitem{FletcherRangan09a}
A.~K. Fletcher and S.~Rangan, ``Orthogonal matching pursuit from noisy
  measurements: a new analysis,'' in \emph{Twenty-Third Annual Conference on
  Neural Information Processing Systems}, Vancouver, B.C., Canada, 2009.

\bibitem{TroppGilbert06}
J.~A. Tropp, A.~C. Gilbert, and M.~J. Strauss, ``Algorithms for simultaneous
  sparse approximation part {I}: greedy pursuit,'' \emph{Signal Processing},
  vol.~86, no.~3, pp. 572--588, 2006.

\bibitem{GribonvalRauhut04}
R.~Gribonval, H.~Rauhut, K.~Schnass, and P.~Vandergheynst, ``Atoms of all
  channels, unite! average case analysis of multi-channel sparse recovery using
  greedy algorithms,'' \emph{The Journal of Fourier Analysis and Applications},
  vol.~14, no.~5, 2008.

\end{thebibliography}
\bibliographystyle{IEEEtran}

\end{document}